\newtheorem{theorem}{Theorem}
\newtheorem{definition}{Definition}
\title[Dynamic Consistency Checking]
	{Dynamic Consistency Checking in Goal-Directed Answer Set Programming}
\author[K. Marple and G. Gupta]
	{KYLE MARPLE and GOPAL GUPTA\\
	Department of Computer Science\\
	The University of Texas at Dallas}
\begin{document}
\maketitle

\begin{abstract}
In answer set programming, inconsistencies arise when the constraints placed on 
a program become unsatisfiable. In this paper, we introduce a technique for 
\textit{dynamic consistency checking} for our goal-directed method for computing
answer sets, under which only those constraints deemed relevant to the partial
answer set are tested, allowing inconsistent knowledgebases to be successfully
queried. However, the algorithm guarantees that, if a program has at least one
consistent answer set, any partial answer set returned will be a subset of some
consistent answer set. To appear in Theory and Practice of Logic Programming
(TPLP).
\end{abstract}

\begin{keywords}
dynamic consistency checking, answer set programming, goal-directed,
consistent query answering
\end{keywords}

\section{Introduction}

Answer Set Programming (ASP) \cite{asp} has gained popularity as a way to 
develop non-monotonic reasoning applications.Three problems which prevent ASP 
from being adopted on a larger scale are (i) the need to compute a complete 
answer set regardless of the query, (ii) the ability of a minor inconsistency 
to render an entire knowledgebase useless, and (iii) the need to ground 
programs prior to execution. Our previous work with goal-directed ASP addresses
the first\cite{goalasp}, and we leave the third for future work. In this paper,
we address the second problem in context of goal-directed execution of answer
set programs. 

Currently, most popular ASP solvers rely on SAT solvers \cite{cmodels,clasp} 
which can't simply disregard inconsistencies that are unrelated to a query.
Because complete answer sets are computed, the underlying program must be
consistent. Thus much of the existing work in querying inconsistent 
knowledgebases has focused on repairing programs to restore consistency
\cite{arenas2003}. In contrast, our goal in this paper is to be able to work
with the consistent part of the knowledgebase, i.e., as long as a query does
not invoke clauses from the part of the knowledgebase that is inconsistent, we
should be able to execute it and produce an answer set, if one exists. Thus, we
do deviate from standard ASP semantics, as under ASP semantics, there are no
answer sets in the presence of inconsistencies in the knowledgebase.

In this paper, we introduce \textit{dynamic consistency checking} (DCC), a 
method for querying inconsistent databases that requires no modification of the
underlying programs or queries. Instead, DCC takes advantage of goal-directed
answer set programming to ignore inconsistencies that are unrelated to the
current query. Additionally, because DCC reduces the number of consistency
checks that a partial answer set must satisfy, it can significantly improve the
performance of goal-directed execution.

At the core of the problem is the issue of relevance. Because ASP and the
underlying stable model semantics lack a \textit{relevance property}, the
truth value of an atom can depend on other, totally unrelated rules and atoms
\cite{dix95}. Because such rules may not be encountered during normal top-down
execution, any goal-directed execution strategy for ASP must either alter the
semantics or employ some form of consistency checking to ensure correctness.
In designing our goal-directed method we chose the latter route, employing
consistency checks to ensure that constraints imposed by these rules are 
satisfied.

DCC employs splitting sets \cite{splitting} to reduce the number of 
consistency checks that must be satisfied while retaining strong guarantees 
regarding correctness. Execution using DCC employs a modified relevance 
criteria to determine which consistency checks are relevant to the current 
partial answer set, and only those checks are enforced.

DCC has been implemented as an extension of the \textit{Galliwasp} 
system \cite{galliwasp}, which makes use of our original goal-directed method.
As we will demonstrate, DCC has several advantages over other potential 
strategies based on ignoring unrelated inconsistencies. We will show that, if a
program has at least one consistent answer set, then a query will succeed using
DCC if and only if the partial answer set returned is a subset of some
consistent answer set. If no consistent answer set exists, then DCC can allow
partial answer sets to be found for a consistent subset of the program. We will
also demonstrate that DCC can improve the performance of goal-directed 
execution and that partial answer sets produced using DCC can provide more
targeted results than either full answer sets or partial answer sets with 
comprehensive consistency checking. 

The remainder of the paper is structured as follows.
In Section \ref{sec:challenges} we discuss issues that are potential 
impediments to widespread adoption of ASP. Next, in Section \ref{sec:goaldir},
we give an overview of goal-directed ASP, focusing on consistency checking. In
Section \ref{sec:dcc} we introduce our technique for dynamic consistency
checking using splitting sets and prove several interesting properties. In
Section \ref{sec:advantages} we examine advantages of DCC and compare the
results of \textit{Galliwasp} with and without dynamic consistency checking. 
Finally, in Section \ref{sec:conclusions} we discuss related and future work 
and draw conclusions.

\section{Answer Set Programming: Challenges} \label{sec:challenges}

While the Answer Set Programming paradigm has gained wide popularity among
researchers, there are still issues that stand in the way of its use by
ordinary users. The overarching goal of our research project is to eliminate
such issues. The major issues are briefly described next, though this paper
is mainly concerned with addressing only the last one.

The first problem relates to grounding an ASP program. Because existing 
systems for executing ASP programs rely on SAT solvers, ASP programs containing
predicates have to be grounded first. Even when restricted to finitely
groundable programs, the size of a grounded program can be exponentially large.
Thus, while writing ASP programs, one has to write code in a way that will
keep the size of the grounded program small. The grounding step can be avoided 
if goal-directed strategies, such as Galliwasp, are developed and used to
execute ASP programs. At present, even though the execution algorithm used by
Galliwasp is goal-directed, it assumes that the input program is grounded. Note
that work is in progress to extend Galliwasp so that predicate ASP programs
(including those containing functions) can be executed in a goal-directed
manner without being grounded \cite{elmer}. 

The second problem relates to computing an entire model of a program. Most 
current ASP execution methods compute the {\it entire} answer set, but in 
practice, we may only be interested in knowing if a specific piece of 
knowledge can be inferred. Consider the case of a large relational database 
coded in ASP. Without additional constraints, a complete answer set will 
contain all of the information in the database, not just the answer to a 
successful query. To work around this, constraints will need to be added to 
pare down the results, effectively requiring that a program be written where a 
single query might otherwise suffice. So if we rely on such solvers, then ASP
can be used for solving specific problems, but its use for building large
knowledge-based applications will pose challenges.

Finally, the third problem relates to being able to work with ASP programs 
which are inconsistent. As long as the answer being sought only 
depends on a consistent subset of the knowledgebase, one should be able to 
infer that knowledge. However, this is not the case with current ASP systems. 
The entire knowledgebase has to be consistent in order for them to produce a 
solution. To take a trivial example, consider a consistent ASP program 
$\kappa$ to which the clause {\tt p :- not p.} 
is added, where {\tt p} does not occur elsewhere in the program. The augmented 
program will have no answer sets. It is difficult for 
SAT solver-based approaches to identify subsets of the program that are 
consistent. A query-driven, goal-directed approach, in contrast, only `touches'
those parts of the program that are needed for establishing the query. All
constraints that involve any of the literals `touched' during the execution of 
the query, directly or indirectly, must also be enforced. However, constraints 
that do not involve such literals need not be executed, as they are independent
of the part of the program that was involved in answering the query. This is
precisely the idea behind our work on {\it dynamic consistency checking}
presented in this paper: only consistency checks that involve the portion of
the program that is `touched' by the query are executed. Thus, adding the rule
{\tt p :- not p.} to knowledgebase $\kappa$ above will not alter the execution
of the program unless a query contains \texttt{p}.

\section{Goal-Directed Answer Set Programming} \label{sec:goaldir}

Under our basic goal-directed method, a partial answer set is constructed by 
adding both positive and negative literals as they succeed during execution. 
When a query succeeds and all consistency checks have been satisfied, the set 
of positive literals in the partial answer set is guaranteed to be a subset of 
some consistent answer set of the program \cite{goalasp}. This method can be 
applied to arbitrary ASP programs, including those with rules that contain 
classical negation and disjunction. Such rules are simply converted to an 
equivalent set of normal rules.

%gg: begin
Execution uses a modified form of co-SLD resolution (SLD resolution with 
coinduction) \cite{coinduction}. Under co-SLD resolution, each call is added to 
the \textit{coinductive hypothesis set} (CHS); a call can succeed 
{\it coinductively} if it unifies with an ancestor call in the CHS. In our 
goal-directed execution method, the CHS also serves as the candidate answer set.
However, some modifications are necessary to adapt co-SLD resolution to ASP:
\begin{itemize}
% km: removed several repetitions of 'also'
%gg1: minor fix (added missing it)
\item Negated calls are also allowed to succeed coinductively, i.e., negated
calls (e.g., not p) are added to the CHS. A negated call can
succeed coinductively if it unifies with an ancestor negated call in the CHS.
\item A literal and its negation cannot be in the CHS at the same time. If 
adding a literal to the CHS leads to such a situation, the computation fails 
and backtracking ensues.
\item Coinductive success is allowed only if an even, non-zero number of 
 negations occur between the recursive call and its ancestor call.
\end{itemize}
%gg: end
%gg: end

While the above description covers the basic execution of our algorithm, it 
omits perhaps the most important part, consistency checking. To understand the 
role of consistency checking, we must first examine the issue of relevance in 
more detail.

\subsection{Relevance} \label{sec:relevance}

The issue of relevance is central to goal-directed ASP. In defining relevance,
\cite{dix95} uses the dependency graph of a program $P$ and the following
notions:
\begin{itemize}
%gg:begin
% km: moved inserted text so that it isn't inside a quotation
\item ``$dependencies\_of(X) := \{ A : X$ depends on $A \}$'', i.e. $X$ calls 
$A$ directly or indirectly, and
%gg:end
\item ``$rel\_rule(P,X)$ is the set of \textit{relevant rules} of $P$ with 
respect to $X$, i.e. the set of rules that contain an
$A \in dependencies\_of(X)$ in their heads.''
\end{itemize}
Then, ``given any semantics SEM and a program $P$, it is perfectly reasonable 
that the truth-value of a literal $L$, with respect to SEM($P$), only depends 
on the subprogram formed from the relevant rules of $P$ with respect to $L$'', 
formalized as:
\begin{definition} \label{def:relevance}
``Relevance states that for all literals $L$:
$SEM(P)(L) = SEM(rel\_rule(P,L))(L )$.'' \cite{dix95}
\end{definition}

Despite being ``perfectly reasonable'', the above definition of relevance does
not hold for ASP. This is due to the presence of rules which contain an odd
loop over negation (OLON). OLONs occur implicitly in rules with an empty head,
but also occur in rules with non-empty heads, whenever a rule can be called
recursively with an odd number of negations between the original and recursive
calls. These ``OLON rules'' place constraints on a program that must be
satisfied by any consistent answer set. For example, given an OLON rule of the
form:
\begin{verbatim}
p :- B, not p.
\end{verbatim}
where \texttt{B} is a conjunction of literals, one of the following 
must be satisfied:
\begin{enumerate}
\item \texttt{p} must succeed through other means, or
\item at least one literal in \texttt{B} must fail.
\end{enumerate}
That is, the rule imposes the constraint $p \vee not\;B$ on the program. Such a
rule can thus alter the truth-value of a literal in \texttt{B} despite not
being relevant to the literal under Definition \ref{def:relevance}.

\subsection{Consistency Checking} \label{sec:consistency}

Because ASP lacks a relevance property of its own, our algorithm uses
consistency checks to enforce a modified relevance property, where for a
program $P$ and literal $L$, the set of rules in $P$ relevant to $L$ is
expanded to include every OLON rule in the program \cite{goalasp}. That is,
\begin{equation} \label{eq:nmrrelrule}
\begin{aligned}
nmr\_rel\_rul(P,L) = rel\_rul(P,L) \cup OLON(P)
\end{aligned}
\end{equation}

\noindent where rel\_rul($P$,$L$) is the set of relevant rules defined in 
Section \ref{sec:relevance} and OLON(P) is the set of OLON rules in $P$. The 
semantics of $P$ with respect to $L$ can now be defined in terms of the
subprogram formed by the expanded set of relevant rules:
\begin{equation}
\label{eq:nmrrel}
SEM(P)(L) = SEM(nmr\_rel\_rule(P,L))(L)
\end{equation}

\noindent This property ensures that an answer set of the subprogram, if one
exists, will be a subset of some consistent answer set of $P$ \cite{goalasp}.

To enforce our modified relevance property, our method uses a special rule, the
\textit{non-monotonic reasoning check} (NMR check), which calls a sub-check for
each OLON rule in a program. Each sub-check ensures that the associated OLON
rule is satisfied. The NMR check is then automatically appended to each query,
ensuring that the property will hold for any query which succeeds
\cite{goalasp}.

\begin{figure}
\figrule
\begin{verbatim}
p :- q.               % Rule 1: OLON
q :- not r, not p.    % Rule 2: OLON
r :- not p.           % Rule 3: Ordinary
:- q, r.              % Rule 4: OLON

chk_1 :- p.
chk_1 :- not q.
chk_2 :- r.
chk_2 :- p.
chk_2 :- q.
chk_4 :- not q.
chk_4 :- not r.
nmr_check :- chk_1, chk_2, chk_4.
\end{verbatim}
\caption{A simple program with consistency checks added.} \label{fig:nmrexample}
\figrule
\end{figure}

The construction of the sub-checks involves creating rules for the dual of each 
OLON rule in the program. Duals explicitly encode the negation of a literal.
For example, given:
\begin{verbatim}
p :- q, not r.
\end{verbatim}
\noindent the dual rules for \texttt{p} are:
\begin{verbatim}
not p :- not q.
not p :- r.
\end{verbatim}
\noindent In the case of sub-checks, the negation of the head is first appended 
to encapsulate the success of a literal through other means. The entire process 
is as follows:
\begin{enumerate}
\item For rules with non-empty heads, the negation of the head is appended to 
	the body of the rule, if not already present.
\item The dual of the rule is computed.
\item The dual is given a unique head, which is also added to the body of the 
	NMR sub-check.
\end{enumerate}
The example in Figure \ref{fig:nmrexample} shows a simple ASP program with the 
NMR check and sub-checks added.

While this ensures that our method adheres to the semantics of ASP, the
execution of the NMR check can adversely impact performance. ASP programs
routinely make heavy use of headless rules to enforce constraints, which can
result in an NMR check which contains thousands of goals. For example, an
instance of the 20-Queens problem can produce an NMR check containing 25,100
goals, each representing a sub-check that must be executed alongside any query.
A means of reducing the performance impact of these checks is thus extremely
desirable.

\section{Dynamic Consistency Checking} \label{sec:dcc}

Dynamic Consistency Checking (DCC) began as an attempt to improve the 
performance of goal-directed execution. While we have developed various other 
techniques to reduce the performance impact of consistency checking, none of
them reduce the actual number of checks that must be satisfied, as this is
impossible to do while guaranteeing full compliance with the ASP semantics. DCC
was our attempt to reduce the number of checks performed while staying as close
to the original ASP semantics as possible.

As any reduction in the number of consistency checks will result in 
non-compliance with the ASP semantics, selecting which checks to enforce 
depends on the properties desired from the modified semantics. In the case of 
DCC, these properties also make the technique useful for querying inconsistent
knowledgebases.
\begin{definition} \label{def:properties}
For a program P, the \textit{desired properties} of DCC are:
\begin{enumerate}
\item Execution shall always be consistent with the ASP semantics of the
	sub-program of P (further defined in Section \ref{sec:dccrelevance}).
\item If P has at least one consistent answer set, execution shall be 
	consistent with the ASP semantics of P.
\end{enumerate}
\end{definition}

In this section, we discuss the relevance property employed by DCC before 
moving on to the algorithm itself. Finally, we provide proofs that DCC 
satisfies the above properties.

\subsection{Relevance Under DCC} \label{sec:dccrelevance}

While our original relevance property, given in Formula \ref{eq:nmrrel}, makes
every consistency check relevant to every literal, DCC selects only those
checks necessary to enforce our desired properties from Definition 
\ref{def:properties}. Relevant checks are dynamically selected based on the 
literals in the partial answer set.

\begin{figure}
\figrule
\begin{verbatim}
a :- b.
b :- not c.
c :- not b.

p :- a.
q :- b.
:- p, q.

chk_1 :- not p.
chk_1 :- not q.
nmr_check :- chk_1.
\end{verbatim}
\caption{Example program (consistency checks added).} \label{fig:example1}
\figrule
\end{figure}

At first glance, it might seem sufficient to select only those checks which 
directly call literals in the partial answer set (or their negations). However, 
this can lead to incorrect results. Consider the program in Figure 
\ref{fig:example1}. One consistent answer set exists:
$\{ c, not\,a, not\,b, not\,p, not\,q \}$.
However, given a query \texttt{?- a.}, selecting only those checks which
directly call some literal in the partial answer set will yield
$\{ a, b, not\,c \}$, thus violating our desired properties.

\begin{figure}
\figrule
\begin{verbatim}
:- p, q.
q :- not r, not q.

chk_1 :- not p.
chk_1 :- not q.
chk_2 :- r.
chk_2 :- q.
nmr_check :- chk_1, chk_2.
\end{verbatim}
\caption{Example program (consistency checks added).} \label{fig:example2}
\figrule
\end{figure}

Clearly, our properties require that we select at least those checks which can 
potentially reach a literal in the partial answer set. However, this can lead 
to behavior that is difficult to predict. Consider the program in Figure 
\ref{fig:example2} with the query \texttt{?- not p.} The presence of either
\texttt{q} or \texttt{not q} in each OLON rule might seem to indicate that both
consistency checks will be activated and cause the query to fail. However, only
\texttt{chk\_1} will be activated. Because the first clause will succeed,
neither \texttt{q} nor its negation will be added to the partial answer set,
and the query will succeed.

To achieve more predictable behavior, DCC selects relevant checks using 
specially constructed \textit{splitting sets}. A splitting set for a program is 
any set of literals such that if the head of a rule is in the set, then every 
literal in the body of the rule must also be in the set \cite{splitting}. The 
rules in a program $ P $ can then be divided relative to a splitting set 
\textit{U} into the bottom, $b_U(P)$, containing those rules whose head is in
\textit{U}, and the top, $P \setminus b_{U}(P)$.

The splitting sets used to determine relevant NMR sub-checks are created by 
constructing splitting sets for each NMR sub-check and merging sets whose 
intersection is non-empty. The result is a set of disjoint splitting sets 
$U_{i}$ such that for an NMR sub-check \textit{C}, if $C \in U_{i}$, then for 
every literal \textit{L} reachable by \textit{C}, $L \in U_{i}$. This allows us 
to define the sub-checks relevant to a literal as those whose heads are in the 
same splitting set:
\begin{equation}
\label{eq:dccrelrule}
\begin{aligned}
dcc\_rel\_rul(P,L) = rel\_rul(P,L) \cup OLON(P,L),\\
OLON(P,L) =
	\lbrace \mbox{$R: R \in OLON(P) \cap b_{U_i}(P) \wedge L \in U_i$} \rbrace
\end{aligned}
\end{equation}

\noindent where OLON(P,L) is the set of OLON rules relevant to $L$. This leads
us to DCC's relevance property, which defines the semantics of $P$ with respect
to $L$ in terms of the new set of relevant rules:
\begin{equation}
\label{eq:dccrel}
SEM(P)(L) = SEM(dcc\_rel\_rule(P,L))(L)
\end{equation}

This definition allows for more predictable behavior than simply selecting the 
checks reachable by a given literal. In the case of Programs \ref{fig:example1} 
and \ref{fig:example2}, only one splitting set will be created, resulting in 
behavior that is identical to normal goal-directed ASP. Indeed, as we will 
prove in Section \ref{sec:proof}, execution will be consistent with ASP 
whenever a program has at least one answer set.

\subsection{Execution with DCC}

Given DCC's relevance property in Formula \ref{eq:dccrel}, our goal-directed 
execution strategy must be modified to enforce it. A query should succeed if and
only if every OLON rule relevant to a literal in the partial answer set is
satisfied. In addition to creating the associated splitting sets, the
application of the relevant NMR sub-checks also becomes more complex.

The creation of the necessary splitting sets can be accomplished by examining a 
program's call graph after the NMR sub-checks have been added. A simple 
depth-first search is sufficient to construct the splitting set for an
individual sub-check, after which overlapping sets can be merged. For added
efficiency, constructing and merging the sets can be performed simultaneously:
whenever a literal is encountered that has already been added to another set,
that set is merged with the current one. This eliminates the need to traverse
any branch in the call graph more than once. The overhead of searching the sets
themselves can be minimized with proper indexing.

To apply the NMR check when executing a query with DCC, it must also be 
dynamically constructed. The NMR check should consist of those sub-checks which 
are relevant to a literal in the partial answer set. However, because the 
sub-checks themselves may add literals to the partial answer set, simply 
executing the query and then selecting the relevant checks once is 
insufficient. Instead, each time a literal succeeds, the relevant sub-checks 
are added to the NMR check. Similarly, the state of the NMR check is restored 
when backtracking occurs. In this manner, the NMR check will always remain 
consistent with the current partial answer set.

\subsection{Correctness of DCC} \label{sec:proof}

Now that we have established DCC's algorithm, we can prove that it satisfies 
the property it was designed to enforce. That is:

\begin{theorem}
If a program $P$ has at least one consistent answer set, then a query will
succeed under DCC if and only if the partial answer set is a subset of some
consistent answer set of $P$.
\end{theorem}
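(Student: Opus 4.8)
The plan is to prove both directions of the biconditional, using DCC's relevance property (Formula \ref{eq:dccrel}) as the bridge between what DCC computes and what the full ASP semantics requires. The key technical lever is the splitting set theorem of \cite{splitting}: if $U$ is a splitting set for $P$, then every answer set of $P$ decomposes into an answer set of the bottom $b_U(P)$ together with an answer set of the partially evaluated top. Let me sketch each direction.

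For the "if" direction, suppose a partial answer set $M$ returned under DCC is a subset of some consistent answer set $S$ of $P$. I would argue the query does in fact succeed under DCC. Since $S$ is a genuine answer set of $P$, it satisfies every OLON rule, hence in particular every OLON sub-check relevant to the literals in $M$ under $dcc\_rel\_rul(P,L)$. Because DCC only enforces the relevant sub-checks, and those are all satisfied by any superset-answer-set $S$, there is no consistency check that can block the derivation of $M$; the coinductive derivation producing $M$ is therefore not pruned, so the query succeeds.

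For the "only if" direction — which I expect is the crux — suppose the query succeeds under DCC returning partial answer set $M$. I must show $M$ extends to a full consistent answer set of $P$. The strategy is to invoke the relevance property (Formula \ref{eq:dccrel}): by construction, the literals in $M$ depend only on rules in $dcc\_rel\_rul(P,L)$ for the query literals $L$, and by the splitting-set construction these relevant rules together with their relevant OLON checks form exactly the bottom $b_{U}(P)$ for the union $U$ of the activated splitting sets. DCC guarantees $M$ is consistent with the ASP semantics of this bottom subprogram (property 1 of Definition \ref{def:properties}), so $M$ restricted to $U$ is (a subset of) an answer set of $b_U(P)$. The remaining work is to extend this to a full answer set of $P$: since $P$ is assumed to have at least one consistent answer set, the top $P \setminus b_U(P)$ is consistent, and by the splitting set theorem any answer set of the bottom can be combined with some answer set of the residual top to yield a full answer set $S \supseteq M$ of $P$.

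The main obstacle is justifying that the \emph{dynamically} constructed, activation-based NMR check coincides with the \emph{static} set $dcc\_rel\_rul(P,L)$ that the relevance property is phrased over. The subtlety, illustrated by Figure \ref{fig:example2}, is that a literal $L$ may belong to a splitting set $U_i$ yet never actually enter $M$ (because the relevant rule succeeds through another clause), so the activated checks form a subset of the full relevant set. I would need a lemma showing that the sub-checks omitted in this way are exactly those whose bodies are vacuously satisfiable given $M$ — i.e., omitting them is sound because $M$'s literals never trigger the corresponding constraint. Establishing that the activated splitting sets $U_i$ are genuine splitting sets of $P$ (so the splitting set theorem applies) and that the merge-on-overlap construction preserves the splitting-set closure condition is the delicate bookkeeping on which both directions ultimately rest.
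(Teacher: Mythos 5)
Your proposal follows essentially the same route as the paper's proof: both rest on the Splitting Set Theorem, decompose the partial answer set into the portion lying in the union $U$ of activated splitting sets (shown to extend to an answer set of $b_U(P)$ because all relevant sub-checks are enforced) and the portion unreachable by any sub-check (shown to extend using the hypothesis that $P$ has a consistent answer set, since only OLON rules can cause inconsistency). Your treatment is in fact somewhat more careful than the paper's, since you explicitly separate the two directions of the biconditional and flag the lemma needed to reconcile the dynamically activated checks with the static set $dcc\_rel\_rul(P,L)$ --- a point the paper's proof passes over silently.
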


\begin{proof}
Observe that, if a DCC query succeeds, the partial answer set will be 
$X = A \cup B$ where
\begin{itemize}
\item $A$ is a partial answer set of the splitting set $U$ formed by the union
	of the splitting sets containing relevant NMR sub-checks
\item $B$ is the set of succeeding literals which are not reachable by any NMR
	sub-check
\end{itemize}

\noindent Per the Splitting Set Theorem \cite{splitting}, a set $X'$ is an 
answer set of $P$ if and only if $X' = A' \cup B'$ where $A'$ is an answer set
of $b_{U}(P)$, $B'$ is an answer set of $e_U(P \setminus b_{U}(P), A')$, and
$A' \cup B'$ is consistent.\footnote{For a set $X$ of positive literals in $U$,
$e_U(P \setminus b_{U}(P), X)$ is a partial evaluation of the top of $P$ with
respect to $X$. The partial evaluation is constructed by first dropping rules
whose bodies contain the negation of a literal in $X$ and them removing calls
to literals in $X$ from the bodies of the remaining rules.} Thus our theory
will hold if $A \subseteq A'$, $B \subseteq B'$ and $A' \cup B'$ is consistent.

Because every NMR sub-check relevant to some literal in $A$ will be activated 
and must succeed for the DCC query to succeed, $A$ will always be a subset of 
some consistent answer set of $b_{U}(P)$. Furthermore, such an answer set must
exist for the DCC query to succeed. Thus, for any succeeding DCC query, there
exists an answer set $A'$ of $b_{U}(P)$ such that $A \subseteq A'$.\footnote{If
no literals in the query are reachable by any NMR sub-checks, $U$ will be empty
and both $A'$ and $A$ will be the empty set.}

Because only OLON rules can lead to inconsistency in an ASP
program\footnote{While rules involving classical negation and disjunction can
lead to inconsistency, Galliwasp handles these by converting them to a set of
equivalent normal rules, including OLON rules.}, the set $B$ will always be a
subset of some consistent answer set of $e_U(P \setminus b_{U}(P), A')$, if one
exists. Therefore, if at least one consistent answer set exists for $P$, we can
select $B'$ such that $B'$ is an answer set of $e_U(P \setminus b_{U}(P), A)$ 
such that $B \subseteq B'$.

Finally, because $A'$ contains every NMR sub-check relevant to any literal in
$A$, $A'$ will always be consistent with $B'$. Thus, if $P$ has at least one
answer set, a query will succeed under DCC if and only the partial answer set
is a subset of some consistent answer set of $P$.
\end{proof}

\section{Advantages of DCC} \label{sec:advantages}

Execution with DCC offers several advantages over normal goal-directed ASP. The 
three primary advantages are partial answer sets of inconsistent programs, 
output that is relevant to the query, and improved performance.

\subsection{Answer Sets of Inconsistent Programs}

One disadvantage of ASP is the way in which it handles inconsistency in a 
knowledgebase. Any inconsistency, no matter how small, renders the entire 
program inconsistent, and thus no answer set will exist. This behavior can be 
particularly inconvenient in large knowledgebases where an inconsistency may be 
completely unrelated to a particular query. Given a large, perfectly consistent 
database implemented in ASP, adding the rule \verb|:- not c.| where \texttt{c} 
is a unique literal, will cause any query to the database to fail.

With DCC, if a query succeeds prior to adding the rule above, then it will 
continue to succeed even after the rule is added.

\subsection{Query-relevant Output} \label{sec:outputrelevance}

One advantage of goal-directed ASP is the ability to compute partial answer 
sets using a query.
%However, while such partial answer sets can prove more 
%useful than a full answer set, the execution of the NMR check will often force 
%the addition of literals unrelated to the current query.
Ideally, partial answer sets will contain only literals which are related to 
the query. However, the execution of the NMR check can force the addition of
literals which are unrelated to the current query. By omitting unnecessary NMR
checks, DCC can limit this irrelevant output.

Consider the case where two consistent ASP programs, A and B, are concatenated 
to form a new program C. Assume that A and B have no literals in common and 
that each contains one or more OLON rules. A full answer set of C will 
obviously contain literals from both of the sub-programs. As a result of the
OLON rules, any partial answer set obtained using goal-directed ASP will also
contain literals from both sub-programs. However, using DCC, a succeeding query
which targets only one sub-program will only contain literals from that
sub-program.

Exploiting this behavior does require care on the part of the programmer. For 
example, many ASP programs use OLON rules in place of queries. However, such 
rules will often force all or most of a program's literals into a single 
splitting set. As a result, every OLON rule will always be deemed relevant, and 
DCC will function no differently than normal goal-directed ASP. We will see
this behavior in some of the sub-programs examined in the next section.

\subsection{Performance Compared to Normal Consistency Checking} 
\label{sec:performance}

\begin{table}
	\caption{Comparative Performance Results}
	\label{tab:perf}
	\begin{minipage}{\textwidth}
	\begin{tabular}{lclrr}
		\hline\hline
		\multicolumn{1}{l}{\multirow{2}{*}{Problem}} &
		\multicolumn{1}{c}{\multirow{2}{*}{Splitting Sets}} &
		\multicolumn{1}{l}{\multirow{2}{*}{Query}} &
		\multicolumn{2}{c}{\multirow{1}{*}{Execution Times\footnote{CPU time in seconds.}}} \\
		& & & Original & w/ DCC \\
		\noalign{\smallskip}
		\hline
		\noalign{\smallskip}
		hanoi-5x15    & 0 & solveh & 0.276 & 0.274\\
		pigeons-30x30 & 1 & solvep & 0.065 & 0.065\\
		schur-3x13    & 1 & solves & 0.105 & 0.105\\
		hanoi-schur   & 1 & solveh & 0.134 & 0.028\\
		hanoi-schur   & 1 & solves & 0.134 & 0.134\\
		hanoi-pigeons & 1 & solveh & 0.346 & 0.341\\
		hanoi-pigeons & 1 & solvep & 0.343 & 0.342\\
		pigeons-schur & 2 & solvep & 9.958 & 0.672\\
		pigeons-schur & 2 & solves & 9.745 & 0.172\\
		han-sch-pigs  & 2 & solveh & 9.817 & 0.093\\
		han-sch-pigs  & 2 & solvep & 9.780 & 0.094\\
		han-sch-pigs  & 2 & solves & 9.942 & 0.201\\
		\hline\hline
	\end{tabular}
	\vspace{-0.5\baselineskip}
	\end{minipage}
\end{table}

In this section we compare \textit{Galliwasp's} performance on several programs,
with and without DCC. As the results in Table \ref{tab:perf} demonstrate, 
programs that take advantage of DCC can see a massive improvement in 
performance. Additionally, even when a program does not take advantage of DCC, 
the overhead remains minimal.

To simulate programs which take advantage of DCC, the following three programs 
were concatenated together in various combinations:

\begin{itemize}
\item \texttt{hanoi-5x15} is a 5 ring, 15 move instance of the Towers of Hanoi.
	The query \texttt{?- solveh.} will return a partial answer set containing
	the solution.
\item \texttt{pigeons-30x30} is an instance of the MxN-Pigeons problem. The
	query \texttt{?- solvep.} will find a complete answer set.
\item \texttt{schur-3x13} is a 3 partition, 13 number instance of the Schur
	Numbers problem. The query \texttt{?- solves.} finds a complete answer set.
\end{itemize}

\noindent Each of the three base programs, and thus each combination, has at 
least one consistent answer set. The Towers of Hanoi instance contains no OLON 
rules, and consequently no splitting sets. The other two programs contain OLON 
rules that force the computation of a complete answer set, and thus have one 
splitting set each. As a result, a DCC query containing only \texttt{solveh} 
will not activate any NMR sub-checks, while queries containing \texttt{solvep} 
or \texttt{solves} will activate every NMR sub-check for their respective 
problems. Thus DCC execution of \texttt{solveh} will not access any splitting 
sets, while \texttt{solvep} and \texttt{solves} will access one set each.

In general, the fewer splitting sets accessed by a DCC query relative to the 
total, the better it will perform compared to a non-DCC query. This is
exemplified by the cases with two splitting sets in Table \ref{tab:perf}. In
the programs tested, each splitting set represents a large number of OLON
rules. As the non-DCC results indicate, the negative impact of increasing the
number of OLON rules can be immense. DCC is able to avoid this by satisfying
only those rules relevant to the current query.

\section{Related and Future Work} \label{sec:related}

DCC is an extension of goal-directed ASP \cite{goalasp} and has been implemented
using the \textit{Galliwasp} system \cite{galliwasp}. The technique relies 
heavily on the properties of splitting sets, and the Splitting Set Theorem in 
particular \cite{splitting}.

Numerous other methods for querying inconsistent databases have been developed. 
The problem of Consistent Query Answering is defined in terms of minimal 
database repairs in \cite{arenas1999}, which develops a technique based on 
query modification that is built upon in several subsequent works 
\cite{celle2000,arenas2003}. 
However, these techniques require that database inconsistencies be identified 
and accounted for. Because DCC relies on a goal-directed technique for 
computing answer sets, our method allows inconsistent information to simply be 
ignored unless it directly relates to the current query.

Plans for future work focus on modifying the technique to work with ungrounded 
ASP programs. Detecting OLONs and constructing the associated splitting sets 
prior to grounding has the potential to both reduce the overhead and allow the 
use of DCC with a wider range of solvers. Of particular interest is integration 
with a datalog ASP system currently under development \cite{elmer}.

\section{Conclusions} \label{sec:conclusions}

In this paper we have introduced Dynamic Consistency Checking (DCC), a 
technique for querying inconsistent ASP programs using a goal-directed 
execution method. We have discussed the relevant aspects of goal-directed ASP, 
presented the relevance criteria which DCC enforces and proven that DCC is 
consistent with the ASP semantics for programs which have at least one 
consistent answer set. Additionally, we have examined the advantages of DCC 
with respect to querying inconsistent databases, achieving more useful output 
from queries, and improving the performance of the \textit{Galliwasp} system. 
As our results demonstrate, DCC can be efficiently implemented and programs 
which take advantage of it can achieve significant benefits. Future work will 
focus on allowing DCC to operate on ungrounded ASP programs and adapting the 
technique into additional ASP solvers.

\bibliographystyle{acmtrans}
\bibliography{dynamic_consistency_checking}

\end{document}